\newcommand{\sgn}{\mathrm{sgn}}
\newcommand{\const}{\mathrm{const}}
\newcommand{\Tr}{\mathrm{Tr}}
\renewcommand{\Im}{\mathrm{Im}}
\renewcommand{\Re}{\mathrm{Re}}
\newtheorem{lemma}{Lemma}
\newtheorem{theorem}{Theorem}
\newtheorem{proposition}{Proposition}
\newtheorem{statement}{Statement}
\numberwithin{equation}{section}
\numberwithin{lemma}{section}
\numberwithin{theorem}{section}
\numberwithin{proposition}{section}
\numberwithin{statement}{section}
\begin{document}

\selectlanguage{english}
\begin{center}
\large Absence of exponentially localized solitons for the Novikov--Veselov equation at negative energy
\end{center}

\begin{center}
A.V. Kazeykina \footnote{Centre des Math\'ematiques Appliqu\'ees, Ecole Polytechnique, Palaiseau, 91128, France; \\ email: kazeykina@cmap.polytechnique.fr}, R.G. Novikov\footnote{CNRS (UMR 7641),  Centre des Math\'ematiques Appliqu\'ees, Ecole Polytechnique, Palaiseau, 91128, France; email: novikov@cmap.polytechnique.fr}
\end{center}

\textbf{Abstract.} We show that Novikov--Veselov equation (an analog of KdV in dimension $ 2 + 1 $) does not have exponentially localized solitons at negative energy.

\section{Introduction}
In the present paper we are concerned with the following $ ( 2 + 1 ) $--dimensional analog of the Korteweg--de Vries equation:
\begin{equation}
\label{NV}
\begin{aligned}
& \partial_t v = 4 \Re ( 4 \partial_z^3 v + \partial_z( v w ) - E \partial_z w ), \\
& \partial_{ \bar z } w = - 3 \partial_{z} w, \quad v = \bar v, \quad E \in \mathbb{R}, \\
& v = v( x, t ), \quad w = w( x, t ), \quad x = ( x_1, x_2 ) \in \mathbb{R}^2, \quad t \in \mathbb{R},
\end{aligned}
\end{equation}
where
\begin{equation}
\label{derivatives}
\partial_t = \frac{ \partial }{ \partial t }, \quad \partial_z = \frac{ 1 }{ 2 } \left( \frac{ \partial }{ \partial x_1 } - i \frac{ \partial }{ \partial x_2 } \right), \quad \partial_{ \bar z } = \frac{ 1 }{ 2 } \left( \frac{ \partial }{ \partial x_1 } + i \frac{ \partial }{ \partial x_2 } \right).
\end{equation}

Equation (\ref{NV}) is contained implicitly in the paper of S.V. Manakov \cite{M} as an equation possessing the following representation
\begin{equation}
\frac{ \partial( L - E ) }{ \partial t } = [ L - E, A ] + B( L - E )
\end{equation}
(Manakov L--A--B triple), where $ L = - \Delta + v( x, t ) $, $ \Delta = 4 \partial_{z} \partial_{ \bar z } $, $ A $ and $ B $ are suitable differential operators of the third and zero order respectively, $ [ \cdot, \cdot ] $ denotes the commutator. Equation (\ref{NV}) was written in an explicit form by S.P. Novikov and A.P. Veselov in \cite{NV1}, \cite{NV2}, where higher analogs of (\ref{NV}) were also constructed. Note that both Kadomtsev--Petviashvili equations can be obtained from (\ref{NV}) by considering an appropriate limit $ E \to \pm \infty $ (see \cite{ZS}, \cite{G}).

In the case when $ v( x_1, x_2, t ) $, $ w( x_1, x_2, t ) $ are independent of $ x_2 $, (\ref{NV}) can be reduced to the classic KdV equation:
\begin{equation}
\label{KdV}
\partial_t u - 6 u \partial_x u + \partial_x^3 u = 0, \quad x \in \mathbb{R}, \quad t \in \mathbb{R}.
\end{equation}
It is well--known that (\ref{KdV}) has the soliton solutions
\begin{equation}
\label{ch}
u( x, t ) = u_{ \kappa, \varphi }( x - 4 \kappa^2 t ) = - \frac{ 2 \kappa^2 }{ ch^2( \kappa( x - 4 \kappa^2 t - \varphi ) ) }, \quad x \in \mathbb{R}, \, t \in \mathbb{R}, \, \kappa \in ( 0, +\infty ), \, \varphi \in \mathbb{R}.
\end{equation}
Evidently,
\begin{equation}
\label{1dim_sol}
\begin{aligned}
& u_{ \kappa, \varphi } \in C^{ \infty }( \mathbb{R} ), \\
& \partial_{ x }^j u_{ \kappa, \varphi }( x ) = O( e^{ -2 \kappa | x | } ) \text{ as } x \to \infty, \quad j = 0, 1, 2, \ldots
\end{aligned}
\end{equation}
Properties (\ref{1dim_sol}) imply that the solitons (\ref{ch}) are exponentially localized in $ x $.

For the $ 2 $--dimensional case we will say that a solution $ ( v, w ) $ of (\ref{NV}) is an exponentially localized soliton if the following properties hold:
\begin{equation}
\label{soliton}
\begin{aligned}
& v( x, t ) = V( x - ct ), \quad x \in \mathbb{R}^2, \quad c = ( c_1, c_2 ) \in \mathbb{R}^2, \\
& V \in C^3( \mathbb{R}^2 ), \quad \partial_x^j V( x ) = O( e^{ - \alpha | x | } ) \text{ for } | x | \to \infty, \quad | j | \leqslant 3 \text{ and some  } \alpha > 0 \\
& (\text{where } j = ( j_1, j_2 ) \in ( 0 \cup \mathbb{N} )^2, | j | = | j_1 | + | j_2 |, \quad \partial^j_x = \partial^{ j_1 + j_2 } / \partial x_1^{ j_1 } \partial x_2^{ j_2 } ), \\
& w( \cdot, t ) \in C( \mathbb{R}^2 ), \quad w( x, t ) \to 0 \text{ as } | x | \to \infty, \quad t \in \mathbb{R}.
\end{aligned}
\end{equation}

In \cite{N1} it was shown that, in contrast with the $ ( 1 + 1 ) $--dimensional case, the $ ( 2 + 1 ) $--dimensional KdV equation (\ref{NV}), at least for $ E = E_{ fixed } > 0 $, does not have exponentially localized solitons. More precisely, in \cite{N1} it was shown that the following theorem is valid for $ E = E_{ fixed } > 0 $:
\begin{theorem}
\label{main_theorem}
Let $ (v, w) $ be an exponentially localized soliton solution of (\ref{NV}) in the sense (\ref{soliton}). Then $ v \equiv 0 $, $ w \equiv 0 $.
\end{theorem}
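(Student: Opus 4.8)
The plan is to exploit the integrability of (\ref{NV}) through the inverse scattering transform for the two-dimensional Schr\"odinger operator $L = -\Delta + v$ at the fixed negative energy $E$, following the scheme of \cite{N1} for $E>0$ but with the scattering data adapted to negative energy. First I would introduce the Faddeev-type eigenfunctions $\psi(x,\lambda)$ of $L\psi = E\psi$ in the negative-energy formalism (the spectral parameter $\lambda$ ranging over $\mathbb{C}^*$), normalized by the asymptotics $\psi(x,\lambda)\to e^{i k(\lambda)\cdot x}$ as $|x|\to\infty$, where $k(\lambda)$ runs over the complex momenta on the energy shell $k^2 = E$. The associated $\bar\partial$-scattering data $b(\lambda)$ is obtained from $V$ by an integral of the form $b(\lambda)=\mathrm{const}\int e^{-i\bar k(\lambda)\cdot x}V(x)\psi(x,\lambda)\,dx$, and $V$ is recovered from $b$ by solving the corresponding $\bar\partial$-equation in $\lambda$. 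The decisive analytic input is that the hypothesis $\partial_x^j V(x)=O(e^{-\alpha|x|})$ in (\ref{soliton}) makes this integral converge for $\lambda$ in an annulus $\{1-\varepsilon<|\lambda|<1+\varepsilon\}$ about the unit circle and forces $b$ to be real-analytic there.

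Next I would compare the two independent descriptions of the time dependence of $b$. On one hand, the standard computation for the Manakov $L$--$A$--$B$ triple shows that under the flow (\ref{NV}) the data evolves as $b(\lambda,t)=\exp[\,t\,\Omega_E(\lambda)\,]\,b(\lambda,0)$ for an explicit dispersion $\Omega_E(\lambda)$ that is cubic in $\lambda$ and $\lambda^{-1}$ (reflecting the third-order operator $\partial_z^3$). On the other hand, the soliton ansatz $v(x,t)=V(x-ct)$ means the potential is a pure spatial translation of $V$; since a translation $x\mapsto x-a$ multiplies the scattering data by a phase depending linearly on $k(\lambda)$, the soliton must satisfy $b(\lambda,t)=\exp[\,-i t\,k(\lambda)\cdot c\,]\,b(\lambda,0)$. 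Equating the two exponents gives the pointwise relation
\[
\Omega_E(\lambda) = -\,i\,k(\lambda)\cdot c \qquad \text{on } \operatorname{supp} b .
\]

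The two sides have incompatible behaviour near the degenerate points $\lambda=0,\infty$: the left-hand side blows up cubically while the right-hand side grows only linearly. Hence the equality cannot hold identically, and the set where it holds is a proper real-analytic subset of the annulus, of empty interior. Since $b$ is real-analytic on the whole annulus by the first step, a real-analytic function vanishing off a set with empty interior vanishes identically, so $b\equiv 0$. Feeding trivial data into the reconstruction ($\bar\partial$-problem with $b\equiv 0$) yields $V\equiv 0$, hence $v\equiv 0$. Finally $w\equiv 0$ follows from the second line of (\ref{NV}) together with the condition $w(\cdot,t)\to 0$ at infinity: with $v\equiv 0$ the equation for $w$ becomes homogeneous, and the decay forces $w\equiv 0$.

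I expect the main obstacle to lie in the first step, namely in making the negative-energy direct scattering map rigorous under the mild regularity assumed in (\ref{soliton}): one must guarantee that the eigenfunctions $\psi(x,\lambda)$ exist and depend analytically on $\lambda$ near the unit circle, controlling the possible exceptional points at which the defining integral equation fails to be uniquely solvable, and one must pin down the exact annulus of analyticity produced by the decay rate $\alpha$. Equally delicate is the injectivity of the reconstruction, i.e. the implication $b\equiv 0 \Rightarrow V\equiv 0$ without any a priori smallness of $V$. By contrast, the dispersion-mismatch argument and the passage $v\equiv 0 \Rightarrow w\equiv 0$ should be comparatively routine.
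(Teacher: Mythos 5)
Your overall architecture coincides with the paper's: Proposition \ref{sol_b_zero} is exactly your ``dispersion mismatch'' step (the translation phase, which by Lemma \ref{shift_lemma} is linear in $\lambda,\ 1/\lambda,\ \bar\lambda,\ 1/\bar\lambda$, cannot match the cubic NV dispersion of (\ref{t_dynamics_b}) except on a set with empty interior, so $b\equiv 0$), and Proposition \ref{b_zero_v_zero} is your reconstruction step. The first step is essentially right, with one caveat: you propose to run the argument on an annulus about $|\lambda|=1$, but that is precisely where the exceptional set $\mathcal{E}$ of (\ref{e_set}) may sit, so $b$ need not be defined, continuous or real-analytic there. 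The paper instead works in neighborhoods of $0$ and $\infty$, where item \ref{limits} of Statement \ref{delta_prop} guarantees $\Delta\neq 0$, and then propagates $b\equiv 0$ to all of $\mathbb{C}\setminus\mathcal{E}$ via the real-analyticity of $\mathcal{B}=b\,\Delta$ on $D_{+}$, $D_{-}$ (Statement \ref{real_analytic}), which is where the exponential decay hypothesis actually enters.

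The genuine gap is the step you yourself flag as ``delicate'' and then skip: the implication $b\equiv 0\Rightarrow v\equiv 0$ at negative energy. This is not a routine ``feed trivial data into the $\bar\partial$-reconstruction,'' because the reconstruction is only available for $\lambda\notin\mathcal{E}$, and a priori $\mathcal{E}$ could be a nonempty set separating $0$ from $\infty$; in that case $\mu(z,\cdot)$ being holomorphic off $\mathcal{E}$ with $\mu\to 1$ at $0$ and $\infty$ does not force $\mu\equiv 1$. The whole of Proposition \ref{b_zero_v_zero} is devoted to ruling this out: from $b\equiv 0$ one gets that $a$ is holomorphic where defined, since its $\bar\partial$-derivative (\ref{a_dbar}) is proportional to $|b|^{2}$; Statement \ref{abT} gives $a=b=0$ on $T$ provided $\Delta\neq 0$ there, whence $a\equiv 0$ near $0$ and $\infty$ by real-analytic continuation; the degenerate case $\Delta\equiv 0$ on $T$ is excluded by the factorization $\Delta=F\overline{F}$ together with its analytic continuation to all of $D_{+}$, $D_{-}$ (Lemma \ref{auxiliary_lemma}); finally the $\bar\partial$-equation (\ref{dbar_delta}) for $\ln\Delta$ forces $\Delta\equiv 1$, hence $\mathcal{E}=\varnothing$, and only then does the Liouville-type argument for $\mu$ yield $\mu\equiv 1$ and $v\equiv 0$. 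None of this machinery appears in your proposal, so as written the proof is incomplete at its central point; the concluding step $v\equiv 0\Rightarrow w\equiv 0$ is indeed routine.
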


The main result of this paper consists in the proof of Theorem \ref{main_theorem} for the case $ E = E_{ fixed } < 0 $. This proof is given in Section \ref{theorem_section} and is based on Propositions \ref{sol_b_zero} and \ref{b_zero_v_zero}. In addition: Proposition \ref{sol_b_zero} is an analog of the result of \cite{N1} about the transparency of sufficiently localized solitons for equation (\ref{NV}) for $ E > 0 $; Proposition \ref{b_zero_v_zero} is an analog of the result of \cite{N2}, \cite{GN} that there are no nonzero bounded real--valued exponentially localized transparent potentials (that is potentials with zero scattering amplitude) for the Schr\"odinger equation (\ref{schrodinger}) for $ E = E_{ fixed } > 0 $.

Note that nonzero bounded algebraically localized solitons for equation (\ref{NV}) for $ E < 0 $ are also unknown (see [G]), but their absence is not proved.

As regards integrable systems in $ 2 + 1 $ dimensions admitting exponentially decaying solitons in all directions on the plane, see \cite{BLMP}, \cite{FS}.

As regards integrable systems in  $ 2 + 1 $ dimensions admitting nonzero bounded algebraically decaying solitons in all directions on the plane, see \cite{FA}, \cite{BLMP}, \cite{G}, \cite{KN} and references therein.

\section{Inverse scattering for the 2--dimensional Schr\"odinger equation at a fixed negative energy}
Consider the scattering problem for the two--dimensional Schr\"odinger equation at a fixed negative energy:
\begin{equation}
\label{schrodinger}
\begin{aligned}
& - \Delta \psi + v( z ) \psi = E \psi, \quad E = E_{ fixed } < 0, \\
& \Delta = 4 \partial_{ z } \partial_{ \bar z }, \quad z = x_1 + i x_2, \quad x \in \mathbb{R}^2,
\end{aligned}
\end{equation}
where $ \partial_z $, $ \partial_{ \bar z } $ are the same as in (\ref{derivatives}).
We will assume that the potential $ v( z ) $ satisfies the following conditions
\begin{equation}
\label{v_conditions}
\begin{aligned}
& v( z ) = \overline{ v( z ) }, \quad v( z ) \in L^{ \infty }( \mathbb{C} ), \\
& | v( z ) | < q ( 1 + | z | )^{ - 2 - \varepsilon } \text{ for some } q > 0, \; \varepsilon > 0.
\end{aligned}
\end{equation}
In this paper we will be concerned with the exponentially decreasing potentials, i.e. with the potentials $ v( z ) $ satisfying (\ref{v_conditions}) and the following additional condition
\begin{equation}
\label{exp_decrease}
v( z ) = O( e^{ - \alpha | z | } ) \text{ as } | z | \to \infty \text{ for some } \alpha > 0.
\end{equation}

Direct and inverse scattering for the two--dimensional Schr\"odinger equation (\ref{schrodinger}) at fixed negative energy under assumptions (\ref{v_conditions}) was considered for the first time in \cite{GspN}. For some of the results discussed in this section see also \cite{N2}, \cite{G}.

First of all, we note that by scaling transform we can reduce the scattering problem with an arbitrary fixed negative energy to the case when $ E = -1 $. Therefore, in our further reasoning we will assume that $ E = -1 $.

It is known that for $ \lambda \in \mathbb{C} \backslash ( 0 \cup \mathcal{E} ) $, where
\begin{equation}
\label{e_set}
\begin{aligned}
& \mathcal{E} \text{ is the set of zeros of the modified Fredholm determinant } \Delta \\
& \text{ for the integral equation (\ref{fr_equation}),}
\end{aligned}
\end{equation}
there exists a unique continuous solution $ \psi( z, \lambda ) $ of (\ref{schrodinger}) with the following asymptotics
\begin{equation}
\label{psi_mu}
\psi( z, \lambda ) = e^{ -\frac{ 1 }{ 2 }( \lambda \bar z + z / \lambda ) } \mu( z, \lambda ), \quad \mu( z, \lambda ) = 1 + o( 1 ), \quad | z | \to \infty.
\end{equation}

In addition, the function $ \mu( z, \lambda ) $ satisfies the following integral equation
\begin{align}
\label{mu_int_equation}
& \mu( z, \lambda ) = 1 + \iint\limits_{ \zeta \in \mathbb{C} } g( z - \zeta, \lambda ) v( \zeta ) \mu( \zeta, \lambda ) d \zeta_R d \zeta_I, \\
\label{green}
& g( z, \lambda ) = - \left( \frac{ 1 }{ 2 \pi } \right)^2 \iint\limits_{ \zeta \in \mathbb{C} } \frac{ \exp( i / 2 ( \zeta \bar z + \bar \zeta z ) ) }{ \zeta \bar \zeta + i ( \lambda \bar \zeta + \zeta / \lambda ) } d \zeta_R d \zeta_I,
\end{align}
where $ z \in \mathbb{C} $, $ \lambda \in \mathbb{C} \backslash 0 $, $ \zeta_R = \Re \zeta  $, $ \zeta_I = \Im \zeta $.

In terms of $ \psi $ of (\ref{psi_mu}) equation (\ref{mu_int_equation}) takes the form
\begin{align}
\label{psi_int_equation}
& \psi( z, \lambda ) = e^{ -1/2 ( \lambda \bar z + z / \lambda ) } + \iint\limits_{ \zeta \in \mathbb{C} } G( z - \zeta, \lambda ) v( \zeta ) \psi( \zeta, \lambda ) d \zeta_R d \zeta_I, \\
\label{g_big}
& G( z, \lambda ) = e^{ -1/2 ( \lambda \bar z + z / \lambda ) } g( z, \lambda ),
\end{align}
where $ z \in \mathbb{C} $, $ \lambda \in \mathbb{C} \backslash 0 $.

In terms of  $ m( z, \lambda ) = ( 1 + | z | )^{ -( 2 + \varepsilon ) / 2 } \mu( z, \lambda ) $ equation (\ref{mu_int_equation}) takes the form
\begin{equation}
\label{fr_equation}
m( z, \lambda ) = ( 1 + | z | )^{ -( 2 + \varepsilon ) / 2 } + \iint\limits_{ \zeta \in \mathbb{C} } ( 1 + | z | )^{ -( 2 + \varepsilon ) / 2 } g( z - \zeta, \lambda ) \frac{ v( \zeta ) }{ ( 1 + | \zeta | )^{ -( 2 + \varepsilon ) / 2 } } m( \zeta, \lambda ) d \zeta_R d \zeta_I,
\end{equation}
where $ z \in \mathbb{C} $, $ \lambda \in \mathbb{C} \backslash 0 $. In addition, $ A( \cdot, \cdot, \lambda ) \in L^2( \mathbb{C} \times \mathbb{C} ) $, $ | \Tr A^2( \lambda ) | < \infty $, where $ A( z, \zeta, \lambda ) $ is the Schwartz kernel of the integral operator $ A( \lambda ) $ of the integral equation (\ref{fr_equation}). Thus, the modified Fredholm determinant for (\ref{fr_equation}) can be defined by means of the formula:
\begin{equation}
\label{fr_det}
\ln \Delta( \lambda ) = \Tr( \ln( I - A( \lambda ) ) + A( \lambda ) )
\end{equation}
(see \cite{GK} for more precise sense of such definition).



Taking the subsequent members in the asymptotic expansion (\ref{psi_mu}) for $ \psi( z, \lambda ) $, we obtain (see \cite{N2}):
\begin{multline}
\label{ab_def}
\psi( z, \lambda ) = \exp\left( - \frac{ 1 }{ 2 } \left( \lambda \bar z + \frac{ z }{ \lambda } \right) \right) \Biggl\{ 1 - 2 \pi \sgn ( 1 - \lambda \bar \lambda ) \times \\
\times \left( \frac{ i \lambda a( \lambda ) }{ z - \lambda^2 \bar z } + \exp\left( - \frac{ 1 }{ 2 } \left( \left( \frac{ 1 }{ \bar \lambda } - \lambda \right) \bar z + \left( \frac{ 1 }{ \lambda } - \bar \lambda \right) z \right) \right) \frac{ \bar \lambda b( \lambda ) }{ i ( \bar \lambda^2 z - \bar z ) } \right) + o\left( \frac{ 1 }{ | z | } \right)\Biggl\},
\end{multline}
$ | z | \to \infty $, $ \lambda \in \mathbb{C} \backslash ( \mathcal{E} \cup 0 ) $.

The functions $ a( \lambda ) $, $ b( \lambda ) $ from (\ref{ab_def}) are called the "scattering" data for the problem (\ref{schrodinger}), (\ref{v_conditions}) with $ E = - 1 $. It is known that for $ a( \lambda ) $, $ b( \lambda ) $ the following formulas hold (see \cite{N2}):
\begin{align}
\label{a_formula}
& a( \lambda ) = \left( \frac{ 1 }{ 2 \pi } \right)^2 \iint\limits_{ z \in \mathbb{C} } \mu( z, \lambda ) v( z ) d z_R d z_I, \\
\label{b_formula}
& b( \lambda ) = \left( \frac{ 1 }{ 2 \pi } \right)^2 \iint\limits_{ z \in \mathbb{C} } \exp\left( - \frac{ 1 }{ 2 } \left( \left( \lambda - \frac{ 1 }{ \bar \lambda } \right) \bar z - \left( \bar \lambda - \frac{ 1 }{ \lambda } \right) z \right) \right) \mu( z, \lambda ) v( z ) d z_R d z_I,
\end{align}
where $ \lambda \in \mathbb{C} \backslash ( 0 \cup \mathcal{E} ) $, $ z_R = \Re z $, $ z_I = \Im z $. In addition, formally, formulas (\ref{a_formula}), (\ref{b_formula}) can be written as
\begin{equation}
\label{abh}
a( \lambda ) = h( \lambda, \lambda ), \quad b( \lambda ) = h\left( \lambda, \frac{ 1 }{ \bar \lambda } \right),
\end{equation}
where
\begin{equation}
\label{h}
h( \lambda, \lambda' ) = \left( \frac{ 1 }{ 2 \pi } \right)^2 \iint\limits_{ z \in \mathbb{C} } \exp\left( \frac{ 1 }{ 2 } \left( \lambda' \bar z + z / \lambda' \right) \right) \psi( z, \lambda ) v( z ) d z_R z_I,
\end{equation}
and $ \lambda \in \mathbb{C} \backslash ( 0 \cup \mathcal{E} ) $, $ \lambda' \in \mathbb{C} \backslash 0 $. (Note that, under assumptions (\ref{v_conditions}), the integral in (\ref{h}) is well--defined if $ \lambda' = \lambda $ of if $ \lambda' = 1 / \bar \lambda $ but is not well--defined in general.)

Let
\begin{equation}
\label{t_set}
T = \{ \lambda \in \mathbb{C} \colon | \lambda | = 1 \}.
\end{equation}
From (\ref{abh}), in particular, the following statement follows:
\begin{statement}
\label{abT}
Let (\ref{v_conditions}) hold and $ \Delta \neq 0 $ on $ T $. Then
\begin{equation}
\label{ab_equality}
a( \lambda ) = b( \lambda ), \quad \lambda \in T.
\end{equation}
\end{statement}

The following properties of functions $ \Delta( \lambda ) $, $ a( \lambda ) $, $ b( \lambda ) $ will play a substantial role in the proof of Theorem \ref{main_theorem}.
\begin{statement}
\label{delta_prop}
Let (\ref{v_conditions}) hold. Then:
\begin{enumerate}
\item \label{continuity} $ \Delta( \lambda ) \in C( \mathbb{C} ) $;
\item \label{limits} $ \Delta( \lambda ) \to 1 $ as $ \lambda \to 0 $ or $ \lambda \to \infty $;
\item \label{constT} $ \Delta( \lambda ) \equiv \const $ for $ \lambda \in T $;
\item \label{real_valued} $ \Delta $ is real--valued: $ \Delta = \bar \Delta $.
\item \label{delta_sym} $ \Delta( \lambda ) = \Delta( 1 / \bar \lambda ) $, $ \lambda \in \mathbb{C} \backslash 0 $.
\end{enumerate}
\end{statement}

\begin{statement}
\label{real_analytic}
Let conditions (\ref{v_conditions})--(\ref{exp_decrease}) be fulfilled. Then:
\begin{itemize}

\item $ \Delta( \lambda ) $ is a real--analytic function on $ D_+ $, $ D_- $, where
\begin{equation}
\label{d_domains}
D_+ = \{ \lambda \in \mathbb{C} \colon 0 < | \lambda | \leqslant 1 \}, \quad D_- = \{ \lambda \in \mathbb{C} \colon | \lambda | \geqslant 1 \}.
\end{equation}

\item $ a( \lambda ) = \dfrac{ \mathcal{A}( \lambda ) }{ \Delta( \lambda ) } $, $ b( \lambda ) = \dfrac{ \mathcal{B}( \lambda ) }{ \Delta( \lambda ) } $, where $ \mathcal{A}( \lambda ) $, $ \mathcal{B}( \lambda ) $ are real--analytic functions on $ D_+ $, $ D_- $.

\end{itemize}
\end{statement}

Items \ref{continuity}--\ref{real_valued} of Statement \ref{delta_prop} are either known or follow from results mentioned in \cite{HN}, \cite{N2} (see page 129 of \cite{HN} and pages 420, 423, 429 of \cite{N2}). In particular, item 1 of Statement \ref{delta_prop} is a consequence of continuous dependency of $ g( z, \lambda ) $ on $ \lambda \in \mathbb{C} \backslash 0 $; item 3 of Statement \ref{delta_prop} is a consequence of (\ref{fr_det}) and of the formula (see pages 420, 423 of \cite{N2}) $ G( z, \lambda ) = ( -i / 4 ) H_0^1( i | z | ) $, $ z \in \mathbb{C} $, $ \lambda \in T $, where $ G $ is defined by (\ref{g_big}), $ H_0^1 $ is the Hankel function of the first type. In addition, item \ref{delta_sym} of Statement \ref{delta_prop} follows from item \ref{real_valued} of this statement and from symmetry $ \overline{G( z, \lambda )} = G( z, 1 / \bar \lambda ) $, $ z \in \mathbb{C} $, $ \lambda \in \mathbb{C} \backslash 0 $.

Statement \ref{real_analytic} is similar to Proposition 4.2 of \cite{N2} and follow from: (i) formulas (\ref{a_formula}), (\ref{b_formula}), (ii) Cramer type formulas for solving the integral equation (\ref{fr_equation}), (iii) the analog of Proposition 3.2 of \cite{N2} for $ g $ of (\ref{green}).

Under assumptions (\ref{v_conditions}), the function $ \mu( z, \lambda ) $, defined by (\ref{mu_int_equation}), satisfies the following properties:
\begin{equation}
\label{mu_continuity}
\mu( z, \lambda ) \text{ is a continuous function of } \lambda \text{ on } \mathbb{C} \backslash ( 0 \cup \mathcal{E} );
\end{equation}

\begin{subequations}
\label{mu_props}
\begin{align}
\label{mu_dbar}
& \frac{ \partial \mu( z, \lambda ) }{ \partial \bar \lambda } = r( z, \lambda ) \overline{ \mu( z, \lambda ) }, \\
\label{r}
& r( z, \lambda ) = r( \lambda ) \exp\left( \frac{ 1 }{ 2 } \left( \left( \lambda - \frac{ 1 }{ \bar \lambda } \right) \bar z - \left( \bar \lambda - \frac{ 1 }{ \lambda } \right) z \right) \right), \\
\label{t}
& r( \lambda ) = \frac{ \pi \sgn( 1 - \lambda \bar \lambda ) }{ \bar \lambda } b( \lambda )
\end{align}
\end{subequations}
for $ \lambda \in \mathbb{C} \backslash ( 0 \cup \mathcal{E} ) $;

\begin{equation}
\label{mu_limits}
\mu \to 1, \text{ as } \lambda \to \infty, \; \lambda \to 0.
\end{equation}

The function $ b $ possesses the following properties (see \cite{GspN}, \cite{N2}):
\begin{gather}
\label{b_cont}
b \in C( \mathbb{C} \backslash \mathcal{E} ), \\
\label{b_sym}
b\left( - \frac{ 1 }{ \overline{\lambda} } \right) = b( \lambda ), \quad b\left( \frac{ 1 }{ \overline{\lambda} } \right) = \overline{ b( \lambda ) }, \quad \lambda \in \mathbb{C} \backslash 0, \\
\label{b_decay}
\lambda^{-1} b( \lambda ) \in L_p( D_+ ) \text{ (as a function of $ \lambda $) if } \mathcal{E} = \varnothing, \quad 2 < p < 4.
\end{gather}

In addition, the following theorem is valid:
\begin{theorem}[\cite{GspN}, \cite{N2}]
Let $ v $ satisfy (\ref{v_conditions}) and $ \mathcal{E} = \varnothing $ for this potential. Then $ v $ is uniquely determined by its scattering data $ b $ (by means of (\ref{mu_continuity}), (\ref{mu_props}) and equation (\ref{NV}) for $ E = -1 $ and $ \psi $ of (\ref{psi_mu}) ).
\end{theorem}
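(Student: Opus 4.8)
The plan is to treat this as a $\bar\partial$-inverse-scattering reconstruction in the spectral variable $\lambda$: the datum $b$ determines the $\bar\partial$-equation satisfied by $\mu$, solving that equation recovers $\mu(z,\lambda)$ for all $z,\lambda$, and the potential $v$ is then read off from the large-$\lambda$ asymptotics of $\mu$. First I would note that under the hypothesis $\mathcal{E}=\varnothing$ the function $\mu(z,\cdot)$ is, by (\ref{mu_continuity}), continuous on all of $\mathbb{C}\backslash 0$, tends to $1$ as $\lambda\to 0$ and $\lambda\to\infty$ by (\ref{mu_limits}), and satisfies the nonlocal $\bar\partial$-equation (\ref{mu_dbar})--(\ref{t}), in which the coefficient $r(z,\lambda)$ is built out of $z$, $\lambda$ and $b(\lambda)$ alone. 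Hence, for each fixed $z$, $\mu(z,\cdot)$ solves a $\bar\partial$-problem whose data are completely determined by $b$.

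Next, applying the Cauchy (dbar) transform together with the normalization $\mu\to 1$, I would rewrite this $\bar\partial$-problem as the integral equation
\begin{equation}
\label{dbar_int}
\mu(z,\lambda) = 1 + \frac{1}{\pi}\iint\limits_{\lambda'\in\mathbb{C}} \frac{r(z,\lambda')\,\overline{\mu(z,\lambda')}}{\lambda-\lambda'}\, d\lambda'_R\, d\lambda'_I
\end{equation}
in the variable $\lambda$, with $z$ entering only as a parameter. The symmetries (\ref{b_sym}) and the decay (\ref{b_decay}) of $\lambda^{-1}b(\lambda)$ control the kernel near $\lambda=0$, near $\lambda=\infty$ and across $T$; I would use them to show that the associated integral operator is compact on a suitable space and that (\ref{dbar_int}) is uniquely solvable, the condition $\mathcal{E}=\varnothing$ being exactly what excludes nontrivial solutions of the homogeneous equation. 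This produces $\mu(z,\lambda)$ for all $z$ and $\lambda$ purely from $b$.

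Finally I would recover the potential from the asymptotics of the reconstructed $\mu$. Substituting $\psi=\exp(-\tfrac12(\lambda\bar z+z/\lambda))\mu$ of (\ref{psi_mu}) into the Schr\"odinger equation (\ref{schrodinger}) with $E=-1$ yields the relation
\begin{equation}
\label{v_from_mu}
v\mu = \Delta\mu - 2\lambda\,\partial_z\mu - \frac{2}{\lambda}\,\partial_{\bar z}\mu ,
\end{equation}
and letting $\lambda\to\infty$ with $\mu\to 1$ gives a reconstruction formula of the type $v(z)=-2\,\partial_z\big(\lim_{\lambda\to\infty}\lambda(\mu(z,\lambda)-1)\big)$, whose right-hand side depends on $b$ only through (\ref{dbar_int}). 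Consequently two potentials satisfying (\ref{v_conditions}) with empty $\mathcal{E}$ and the same $b$ generate the same $r$, hence the same $\mu$, hence the same $v$, which is the claimed uniqueness.

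I expect the main obstacle to be the solvability step: establishing that the non-holomorphic integral equation (\ref{dbar_int}), which couples $\mu$ with $\overline{\mu}$, has a unique solution. Standard Fredholm theory does not apply verbatim, so one must set the equation in an appropriate weighted $L_p$ (or mixed) space adapted to the behavior of $r$ near $\lambda=0$, near $\infty$ and on the circle $T$, and then rule out homogeneous solutions using precisely the hypothesis $\mathcal{E}=\varnothing$ together with the real-structure symmetries (\ref{b_sym}). Verifying the interchange of limit and integration in the reconstruction formula, as well as the $C^{3}$-regularity and decay of the resulting $v$, should be comparatively routine once unique solvability is in hand.
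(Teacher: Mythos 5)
The paper gives no proof of this theorem: it is imported from \cite{GspN} and \cite{N2}, and the parenthetical in the statement already names the intended mechanism. Your proposal reconstructs exactly that mechanism --- pass from the $\bar\partial$-equation (\ref{mu_dbar})--(\ref{t}) with the normalization (\ref{mu_limits}) to a conjugate-linear integral equation in the spectral variable $\lambda$, then read $v$ off the $1/\lambda$ term of $\mu$ --- so in outline it is the standard and correct argument, and your reconstruction formula $v=-2\,\partial_z\lim_{\lambda\to\infty}\lambda(\mu-1)$ checks out against (\ref{schrodinger}) with $E=-1$. One step is misattributed, however, and it is the crucial one. The unique solvability of your $\lambda$-plane integral equation is \emph{not} what $\mathcal{E}=\varnothing$ provides: by (\ref{e_set}), $\mathcal{E}$ is the zero set of the modified Fredholm determinant of the $z$-space equation (\ref{fr_equation}) governing the \emph{direct} problem, and its emptiness only guarantees that $\mu(z,\cdot)$ is globally defined and continuous on $\mathbb{C}\backslash 0$, that (\ref{mu_dbar}) holds for all $\lambda\neq 0$, and that $b$ enjoys (\ref{b_cont})--(\ref{b_decay}). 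The uniqueness of the normalized solution of the $\bar\partial$-problem in $\lambda$ is a separate fact: the difference $\nu$ of two solutions satisfies $\partial_{\bar\lambda}\nu=r\bar\nu$ with $\nu\to 0$ at $0$ and $\infty$, and it is killed by the Liouville-type vanishing theorem for generalized analytic functions (Vekua theory), which applies because $|r(z,\lambda)|=|r(\lambda)|$ (the exponential in (\ref{r}) is unimodular) and (\ref{b_decay}) together with the symmetry (\ref{b_sym}) place $r$ in the requisite $L_p$ classes near $0$, $\infty$ and across $T$. Note also that for a pure uniqueness statement you never need to establish existence of solutions to your integral equation: the two functions $\mu$ arising from two potentials with the same $b$ already exist and solve the same $\bar\partial$-problem, so the vanishing theorem alone closes the argument. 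With that substitution your proof coincides with the one in the cited references.
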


Finally, if $ ( v( z, t ), w( z, t ) ) $ is a solution of equation (\ref{NV}) with $ E = -1 $, where $ ( v( z, t ), w( z, t ) ) $ satisfy the following conditions:
\begin{align}
\label{regularity_decay}
& v, w \in C( \mathbb{R}^2 \times \mathbb{R} ) \text{ and for each $ t \in \mathbb{R} $ the following properties are fulfilled:}\nonumber \\
& v( \cdot, t ) \in C^3( \mathbb{R}^2 ), \quad \partial_{ x }^{ j } v( x, t ) = O\left( | x |^{ - 2 - \varepsilon } \right) \text{ for } | x | \to \infty, | j | \leq 3 \text{ and some } \varepsilon > 0, \\
& w( x, t ) \to 0 \text{ for } | x | \to \infty, \nonumber
\end{align}
then the dynamics of the scattering data is described by the following equations
\begin{align}
\label{t_dynamics_a}
& a( \lambda, t ) = a( \lambda, 0 ), \\
\label{t_dynamics_b}
& b( \lambda, t ) = \exp\left\{ \left( \lambda^3 + \frac{ 1 }{ \lambda^3 } - \bar \lambda^3 - \frac{ 1 }{ \bar \lambda^3 } \right) t \right\} b( \lambda, 0 ),
\end{align}
where $ \lambda \in \mathbb{C} \backslash 0 $, $ t \in \mathbb{R} $.

\section{Proof of Theorem \ref{main_theorem}}
\label{theorem_section}
The proof of Theorem \ref{main_theorem} is based on Proposition \ref{sol_b_zero} and Proposition \ref{b_zero_v_zero} given below.

\begin{lemma}
\label{shift_lemma}
Let $ v( z ) $ satisfy (\ref{v_conditions}) and $ a( \lambda ) $, $ b( \lambda ) $ be the scattering data corresponding to $ v( z ) $. Then the scattering data $ a_{ \zeta }( \lambda ) $, $ b_{ \zeta }( \lambda ) $ for the potential $ v_{ \zeta }( z ) = v( z - \zeta ) $ are related to $ a( \lambda ) $, $ b( \lambda ) $ by the formulas
\begin{align}
\label{a_shift}
& a_{ \zeta }( \lambda ) = a( \lambda ), \\
\label{b_shift}
& b_{ \zeta }( \lambda ) = \exp\left( -\frac{ 1 }{ 2 } \left( \left( \lambda - \frac{ 1 }{ \bar \lambda } \right) \bar \zeta - \left( \bar \lambda - \frac{ 1 }{ \lambda } \right) \zeta \right) \right) b( \lambda ),
\end{align}
where $ z, \zeta \in \mathbb{C} $, $ \lambda \in \mathbb{C} \backslash 0 $.
\end{lemma}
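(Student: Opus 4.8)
The plan is to trace the effect of the translation $ v_{ \zeta }( z ) = v( z - \zeta ) $ first through the integral equation (\ref{mu_int_equation}) for $ \mu $ and then through the defining integrals (\ref{a_formula}), (\ref{b_formula}) for $ a $ and $ b $. The crucial observation is that the Green kernel $ g( z - \zeta, \lambda ) $ in (\ref{mu_int_equation}) depends on its spatial arguments only through their difference, so the associated integral operator commutes with spatial translations. I therefore first expect to establish that the function $ \mu_{ \zeta } $ corresponding to the shifted potential satisfies $ \mu_{ \zeta }( z, \lambda ) = \mu( z - \zeta, \lambda ) $, and then deduce (\ref{a_shift}), (\ref{b_shift}) by a change of variables.

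For the first step, let $ \mu_{ \zeta } $ denote the solution of (\ref{mu_int_equation}) with $ v $ replaced by $ v_{ \zeta } $. Writing out this equation, substituting $ \eta \mapsto \eta + \zeta $ in the integration variable (so the kernel becomes $ g( z - \zeta - \eta, \lambda ) $ and the potential becomes $ v( \eta ) $), and then replacing $ z $ by $ z + \zeta $, I find that $ \tilde \mu( z, \lambda ) := \mu_{ \zeta }( z + \zeta, \lambda ) $ satisfies exactly equation (\ref{mu_int_equation}) written for $ v $. By uniqueness of the solution of (\ref{mu_int_equation}) for $ \lambda \in \mathbb{C} \backslash ( 0 \cup \mathcal{E} ) $, I conclude $ \tilde \mu( z, \lambda ) = \mu( z, \lambda ) $, that is
\begin{equation*}
\mu_{ \zeta }( z, \lambda ) = \mu( z - \zeta, \lambda ), \quad \lambda \in \mathbb{C} \backslash ( 0 \cup \mathcal{E} ).
\end{equation*}

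The second step is a direct substitution. For $ a_{ \zeta } $ I insert $ \mu_{ \zeta }( z, \lambda ) = \mu( z - \zeta, \lambda ) $ and $ v_{ \zeta }( z ) = v( z - \zeta ) $ into (\ref{a_formula}) and change variables $ z \mapsto z + \zeta $; the integrand is unchanged and I recover $ a_{ \zeta }( \lambda ) = a( \lambda ) $, which is (\ref{a_shift}). For $ b_{ \zeta } $ I do the same in (\ref{b_formula}). Here the exponent of the weight is affine in $ z $ and $ \bar z $, so after $ z \mapsto z + \zeta $ it splits additively into a term carrying the $ z $-dependence, which reassembles $ b( \lambda ) $, and a $ \zeta $-dependent constant that factors out of the integral. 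That constant is precisely $ \exp( - \frac{ 1 }{ 2 }( ( \lambda - 1 / \bar \lambda ) \bar \zeta - ( \bar \lambda - 1 / \lambda ) \zeta ) ) $, yielding (\ref{b_shift}).

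I do not anticipate a serious analytic obstacle here; the points that require care are the legitimacy of the change of integration variable, guaranteed by the decay assumptions (\ref{v_conditions}) which make the integrals in (\ref{a_formula}), (\ref{b_formula}) absolutely convergent, and the appeal to uniqueness of $ \mu $, which holds away from the exceptional set $ \mathcal{E} $ and the point $ 0 $. On $ \mathcal{E} \cup 0 $ the identities either extend by the continuity in $ \lambda $ recorded above or may simply be read as asserted on the domain where the scattering data are defined.
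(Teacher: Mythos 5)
Your proof is correct and follows essentially the same route as the paper: first establish $ \mu_{ \zeta }( z, \lambda ) = \mu( z - \zeta, \lambda ) $, then change variables in (\ref{a_formula}), (\ref{b_formula}). The only (immaterial) difference is that you derive the translation identity for $ \mu $ from the integral equation (\ref{mu_int_equation}) and its uniqueness, whereas the paper reads it off from the Schr\"odinger equation together with uniqueness of the solution $ \psi $ with asymptotics (\ref{psi_mu}).
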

\begin{proof}
We first note that $ \psi( z - \zeta, \lambda ) $ satisfies equation (\ref{schrodinger}) with the operator $ L = -\Delta + v_{ \zeta }( z ) $. Then the function $ \psi_{ \zeta }( z, \lambda ) $ corresponding to $ v_{ \zeta }( z ) $ and possessing the asymptotics (\ref{psi_mu}) is $ \psi_{ \zeta }( z, \lambda ) = e^{ -\frac{ 1 }{ 2 }\left( \lambda \bar \zeta + \zeta / \lambda \right) } \psi( z - \zeta, \lambda ) $. In terms of function $ \mu $ this relation is written $ \mu_{ \zeta }( z, \lambda ) = \mu( z - \zeta, \lambda ) $ Thus we have
\begin{equation*}
a_{ \zeta }( \lambda ) = \left( \frac{ 1 }{ 2 \pi } \right)^2 \iint\limits_{ z \in \mathbb{C} } v( z - \zeta ) \mu( z - \zeta, \lambda ) d z_R d z_I = a( \lambda ),
\end{equation*}
and, similarly,
\begin{multline*}
b_{ \zeta }( \lambda ) = \left( \frac{ 1 }{ 2 \pi } \right)^2 \iint\limits_{ z \in \mathbb{C} } \exp\left( -\frac{ 1 }{ 2 } \left( \left( \lambda - \frac{ 1 }{ \bar \lambda } \right) \bar z - \left( \bar \lambda - \frac{ 1 }{ \lambda } \right) z \right) \right) \times \\
\times v( z - \zeta ) \mu( z - \zeta, \lambda ) d z_R d z_I = \\
= \exp\left( -\frac{ 1 }{ 2 } \left( \left( \lambda - \frac{ 1 }{ \bar \lambda } \right) \bar \zeta - \left( \bar \lambda - \frac{ 1 }{ \lambda } \right) \zeta \right) \right) b( \lambda ).
\end{multline*}
\end{proof}

\begin{proposition}
\label{sol_b_zero}
Let $ ( v( z, t ), w( z, t ) ) $ be an exponentially localized soliton of (\ref{NV}) in the sense (\ref{soliton}). Let $ b( \lambda, t ) $ be the scattering data for $ v( z, t ) $ for some $ E = E_{ fixed } < 0 $. Then $ b( \lambda, t ) \equiv 0 $ in the domain where it is well--defined, i.e. in $ \mathbb{C} \backslash \mathcal{E} $, where $ \mathcal{E} $ is defined by (\ref{e_set}).
\end{proposition}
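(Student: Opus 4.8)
The plan is to compute the scattering data $ b( \lambda, t ) $ of the soliton in two independent ways and to compare them. Writing $ z = x_1 + i x_2 $ and identifying the velocity $ c = ( c_1, c_2 ) $ with the complex number $ c_1 + i c_2 \in \mathbb{C} $, the soliton condition (\ref{soliton}) means that $ v( \cdot, t ) $ is the spatial translate of $ v( \cdot, 0 ) $ by $ \zeta = c t $. Since the exponential localization (\ref{soliton}) implies in particular (\ref{v_conditions}) and (\ref{regularity_decay}), both Lemma \ref{shift_lemma} and the evolution equations (\ref{t_dynamics_a}), (\ref{t_dynamics_b}) apply. Applying Lemma \ref{shift_lemma} with $ \zeta = c t $ gives
\begin{equation}
b( \lambda, t ) = \exp\left( -\frac{ t }{ 2 } \left( \left( \lambda - \frac{ 1 }{ \bar \lambda } \right) \bar c - \left( \bar \lambda - \frac{ 1 }{ \lambda } \right) c \right) \right) b( \lambda, 0 ),
\end{equation}
while the dynamics (\ref{t_dynamics_b}) gives
\begin{equation}
b( \lambda, t ) = \exp\left( \left( \lambda^3 + \frac{ 1 }{ \lambda^3 } - \bar \lambda^3 - \frac{ 1 }{ \bar \lambda^3 } \right) t \right) b( \lambda, 0 ),
\end{equation}
both valid for $ \lambda \in \mathbb{C} \setminus ( 0 \cup \mathcal{E} ) $ and all $ t \in \mathbb{R} $.

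Next I would argue by contradiction, assuming $ b( \cdot, 0 ) \not\equiv 0 $. By the continuity (\ref{b_cont}) of $ b $ on $ \mathbb{C} \setminus \mathcal{E} $, the set $ U = \{ \lambda \in \mathbb{C} \setminus ( 0 \cup \mathcal{E} ) \colon b( \lambda, 0 ) \neq 0 \} $ is then nonempty and open. Comparing the two displayed formulas, for every $ \lambda \in U $ we obtain $ e^{ ( \gamma( \lambda ) - \beta( \lambda ) ) t } = 1 $ for all $ t \in \mathbb{R} $, where I abbreviate the two exponent rates by
\begin{equation}
\gamma( \lambda ) = -\frac{ 1 }{ 2 } \left( \left( \lambda - \frac{ 1 }{ \bar \lambda } \right) \bar c - \left( \bar \lambda - \frac{ 1 }{ \lambda } \right) c \right), \qquad \beta( \lambda ) = \lambda^3 + \frac{ 1 }{ \lambda^3 } - \bar \lambda^3 - \frac{ 1 }{ \bar \lambda^3 }.
\end{equation}
Since this holds for all real $ t $, separating modulus and phase yields $ \gamma( \lambda ) = \beta( \lambda ) $ for every $ \lambda \in U $.

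The crux is then a rigidity argument. Both $ \gamma $ and $ \beta $ are real--analytic functions of $ ( \Re \lambda, \Im \lambda ) $ on the connected set $ \mathbb{C} \setminus 0 $, so by the identity principle for real--analytic functions their difference, vanishing on the nonempty open set $ U $, would have to vanish on all of $ \mathbb{C} \setminus 0 $. This is impossible: $ \beta - \gamma $ is a non--trivial Laurent polynomial in $ \lambda $ and $ \bar \lambda $, the monomial $ \lambda^3 $ occurring in $ \beta $ but not in $ \gamma $, so $ \beta - \gamma \not\equiv 0 $ (one may also make this explicit by evaluating at two suitable points). This contradiction forces $ U = \varnothing $, i.e. $ b( \lambda, 0 ) \equiv 0 $ on $ \mathbb{C} \setminus ( 0 \cup \mathcal{E} ) $, and then by continuity (\ref{b_cont}) also at $ \lambda = 0 $ whenever it lies in the domain; either displayed formula then gives $ b( \lambda, t ) \equiv 0 $ for all $ t $ on $ \mathbb{C} \setminus \mathcal{E} $.

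The main obstacle is precisely this rigidity step, and it encodes the essential difference between one and two dimensions. The translation rate $ \gamma $ is governed by the single complex velocity $ c $ and is of ``first order'' in $ \lambda $, whereas the Novikov--Veselov dispersion rate $ \beta $ is of ``third order'' in $ \lambda $; these can never agree on an open set of the spectral parameter. In the reduction to KdV the spectral parameter $ \lambda $ effectively collapses to a single value, and there the two rates can be matched --- which is exactly why one--dimensional solitons persist while genuinely two--dimensional exponentially localized ones cannot. I expect the only points requiring care to be the verification that the soliton hypotheses justify invoking both Lemma \ref{shift_lemma} and the evolution equations, and the clean passage from $ e^{ ( \gamma - \beta ) t } \equiv 1 $ to $ \gamma = \beta $.
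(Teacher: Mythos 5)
Your proof is correct and follows essentially the same route as the paper: comparing the translation dynamics obtained from Lemma \ref{shift_lemma} with the Novikov--Veselov dynamics (\ref{t_dynamics_b}) and concluding from the linear independence of the Laurent monomials $ \lambda^{\pm 1}, \bar\lambda^{\pm 1}, \lambda^{\pm 3}, \bar\lambda^{\pm 3}, 1 $ that $ b( \cdot, 0 ) $ must vanish. The only (minor) organizational difference is that the paper first proves $ b( \lambda, 0 ) \equiv 0 $ on neighborhoods of $ 0 $ and $ \infty $ where $ \Delta \neq 0 $ and then propagates the vanishing using the real--analyticity of Statement \ref{real_analytic}, whereas you run the exponent comparison directly on the open set $ \{ b( \cdot, 0 ) \neq 0 \} $, which bypasses that last extension step.
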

\begin{proof}
In virtue of (\ref{t_dynamics_b}) and Statement \ref{real_analytic} it is sufficient to prove that $ b( \lambda, 0 ) \equiv 0 $ in some neighborhoods of $ 0 $ and $ \infty $.

Let $ U_{ 0 } $, $ U_{ \infty } $ be the neighborhoods of $ 0 $ and  $ \infty $, respectively, such that $ \Delta \neq 0 $ in $ U_{ 0 } $, $ U_{ \infty } $ (such neighborhoods exist  in virtue of item \ref{limits} of Statement \ref{delta_prop}).
For $ \lambda \in U_{ 0 } \cup U_{ \infty } $ the function $ b( \lambda, 0 ) $ is well--defined and continuous. As $ ( v( z, t ), w( z, t ) ) $ is a soliton, the dynamics of the function $ b( \lambda, t ) $ can be written as
\begin{equation}
b( \lambda, t ) = \exp\left( - \frac{ 1 }{ 2 }\left( \left( \lambda - \frac{ 1 }{ \bar \lambda } \right) \bar c - \left( \bar \lambda - \frac{ 1 }{ \lambda } \right) c \right) t \right) b( \lambda, 0 )
\end{equation}
(see Lemma \ref{shift_lemma}).

Combining this with formula (\ref{t_dynamics_b}), we obtain
\begin{multline*}
\exp\left\{ - \frac{ 1 }{ 2 }\left( \left( \lambda - \frac{ 1 }{ \bar \lambda } \right) \bar c - \left( \bar \lambda - \frac{ 1 }{ \lambda } \right) c \right) t \right\} b( \lambda, 0 ) = \\
= \exp\left\{ \left( \lambda^3 + \frac{ 1 }{ \lambda^3 } - \bar \lambda^3 - \frac{ 1 }{ \bar \lambda^3 } \right) t \right\} b( \lambda, 0 ).
\end{multline*}
As functions $ \lambda $, $ \bar \lambda $, $ \frac{ 1 }{ \lambda } $, $ \frac{ 1 }{ \overline{ \lambda } } $, $ \lambda^3 $, $ \bar \lambda^3 $, $ \frac{ 1 }{ \lambda^3 } $, $ \frac{ 1 }{ \overline{\lambda}^3 } $, $ 1 $ are linearly independent in any neighborhood of $ 0 $ and $ \infty $, we obtain that $ b( \lambda, 0 ) \equiv 0 $ in $ U_{ 0 } \cup U_{ \infty } $.
\end{proof}

\begin{proposition}
\label{b_zero_v_zero}
Let $ v( z ) $ satisfy (\ref{v_conditions})--(\ref{exp_decrease}) and $ b( \lambda ) $ be its scattering data for some $ E = E_{ fixed } < 0 $. If $ b( \lambda ) \equiv 0 $ in the domain where it is well--defined, i.e. in $ \mathbb{C} \backslash \mathcal{E} $, where $ \mathcal{E} $ is defined by (\ref{e_set}), then $ v \equiv 0 $.
\end{proposition}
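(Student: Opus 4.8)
The plan is to show that the hypothesis $b \equiv 0$ forces the normalized eigenfunction $\mu(z, \lambda)$ of (\ref{psi_mu}) to be holomorphic in $\lambda$ and, after removal of all its singularities, to equal $1$ identically; the conclusion $v \equiv 0$ is then immediate from the Schr\"odinger equation. Concretely, I would feed $b \equiv 0$ into the $\bar\partial$--equation (\ref{mu_dbar})--(\ref{t}): since $r(\lambda) = \pi \sgn(1 - \lambda \bar\lambda) \bar\lambda^{-1} b(\lambda)$, the vanishing of $b$ on $\mathbb{C} \backslash \mathcal{E}$ gives $r(z, \lambda) \equiv 0$, hence $\partial \mu / \partial \bar\lambda \equiv 0$ for each fixed $z$ on $\mathbb{C} \backslash (0 \cup \mathcal{E})$. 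Thus $\mu(z, \cdot)$ is holomorphic there; by (\ref{mu_continuity}) it is continuous on $\mathbb{C} \backslash (0 \cup \mathcal{E})$, and by (\ref{mu_limits}) it tends to $1$ as $\lambda \to 0$ and as $\lambda \to \infty$.

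If I can establish that $\mathcal{E} = \varnothing$, the rest is soft. Then $\mu(z, \cdot)$ is holomorphic on the punctured plane $\mathbb{C} \backslash 0$, continuous up to $0$ with limit $1$, and bounded near $\infty$ with limit $1$; the singularity at $0$ is removable, so $\mu(z, \cdot)$ extends to a bounded entire function, and Liouville's theorem gives $\mu(z, \lambda) \equiv 1$ for every $z$. Finally, with $\mu \equiv 1$ the eigenfunction of (\ref{psi_int_equation}) is the free exponential $\psi = \exp(-\frac{1}{2}(\lambda \bar z + z / \lambda))$, which by construction solves (\ref{schrodinger}); but a direct computation with $\Delta = 4 \partial_z \partial_{\bar z}$ and $E = -1$ shows $-\Delta \psi = E \psi$ already, so subtracting yields $v \psi \equiv 0$, and since $\psi$ never vanishes we conclude $v \equiv 0$. (Alternatively, once $\mathcal{E} = \varnothing$ is known, one may bypass the Liouville step and simply invoke the uniqueness theorem quoted in Section~2, comparing $v$ with the zero potential: both satisfy (\ref{v_conditions}), have empty exceptional set, and share the data $b \equiv 0$.)

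The heart of the matter --- and the step I expect to be the main obstacle --- is therefore proving $\mathcal{E} = \varnothing$, and this is where the exponential decay (\ref{exp_decrease}) enters decisively through Statement \ref{real_analytic}. Since $\Delta(\lambda) \to 1$ at $0$ by item \ref{limits} of Statement \ref{delta_prop}, we have $\Delta \not\equiv 0$, so its zero set has empty interior and $\mathbb{C} \backslash \mathcal{E}$ is dense; as $b = \mathcal{B} / \Delta \equiv 0$ there with $\mathcal{B}$ real--analytic, it follows that $\mathcal{B} \equiv 0$ on $D_+, D_-$. I would then argue that $\Delta$ itself is holomorphic in $\lambda$: differentiating the Fredholm--determinant formula (\ref{fr_det}) and using that the $\bar\lambda$--derivative of the kernel of (\ref{fr_equation}) reproduces the coefficient $b$ (an identity in the same spirit as (\ref{mu_dbar})), one obtains $\partial \Delta / \partial \bar\lambda$ proportional to $b$, which vanishes off $\mathcal{E}$ and, by real--analyticity, vanishes on all of $\inters D_+$. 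A function that is simultaneously holomorphic and real--valued (item \ref{real_valued} of Statement \ref{delta_prop}) on the connected set $\inters D_+$ must be constant; continuity (item \ref{continuity}) together with $\Delta \to 1$ at $0$ pins that constant at $1$, and item \ref{constT} of Statement \ref{delta_prop} forces the same value $1$ on $T$, while the symmetry $\Delta(\lambda) = \Delta(1/\bar\lambda)$ (item \ref{delta_sym}) propagates $\Delta \equiv 1$ to $D_-$. Hence $\Delta$ never vanishes and $\mathcal{E} = \varnothing$.

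I expect the genuine technical work to lie entirely in this last paragraph, namely in establishing rigorously the identity expressing $\partial_{\bar\lambda} \ln \Delta$ through $b$ (and hence that $b \equiv 0$ makes $\Delta$ holomorphic), together with the behaviour of $\Delta$ across $T$ where item \ref{constT} guarantees at least constancy. Everything else is either a direct consequence of the $\bar\partial$--equation and the asymptotics already recorded in Section~2, or the elementary Liouville/substitution argument of the first two paragraphs.
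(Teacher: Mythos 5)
Your overall architecture (prove $\mathcal{E} = \varnothing$, then feed $b \equiv 0$ into the $\bar\partial$--equation for $\mu$ and use continuity, the limits (\ref{mu_limits}) and a Liouville--type argument to get $\mu \equiv 1$, hence $v \equiv 0$) matches the paper's final step, and you correctly identify $\mathcal{E} = \varnothing$ as the crux. But the route you propose to that crux rests on an identity that is not true: the $\bar\partial$--derivative of $\ln\Delta$ is \emph{not} proportional to $b$. The Henkin--Novikov formula the paper uses, equation (\ref{dbar_delta}), reads
\[
\frac{ \partial \ln \Delta ( \lambda ) }{ \partial \bar \lambda } = - \frac{ \pi \sgn ( \lambda \bar \lambda - 1 ) }{ \bar \lambda } \left( a\!\left( \frac{ 1 }{ \bar \lambda } \right) - \hat v( 0 ) \right),
\]
so the holomorphy of $\ln\Delta$ is controlled by $a$, not by $b$. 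Consequently $b \equiv 0$ does not directly make $\Delta$ holomorphic; one must first show that $a(1/\bar\lambda) - \hat v(0)$ vanishes near $0$ and $\infty$.

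That intermediate step is the real content of the paper's proof and is absent from your proposal. It goes as follows: from (\ref{a_dbar}), $\partial_{\bar\lambda} a = \pi \sgn(1-\lambda\bar\lambda)\bar\lambda^{-1} b \bar b \equiv 0$, so $a$ is holomorphic off $\mathcal{E}$; by Statement \ref{abT}, $a = b = 0$ on $T$, \emph{provided} $\Delta \neq 0$ on $T$, whence $a \equiv 0$ on a neighborhood of $T$ and then, by the real--analyticity of Statement \ref{real_analytic}, on $U_0 \cup U_\infty$; only then does (\ref{dbar_delta}) yield $\partial_{\bar\lambda}\ln\Delta = 0$ near $0$ and $\infty$ and hence $\Delta \equiv 1$. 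Moreover the remaining case $\Delta \equiv 0$ on $T$ (where Statement \ref{abT} is unavailable) has to be excluded separately; the paper does this via the factorization $\Delta = F\overline{F}$ with $F$ holomorphic and the analytic--continuation Lemma \ref{auxiliary_lemma}, whose proof occupies all of Section \ref{proof_section}. By attributing the $\bar\partial$--derivative of $\ln\Delta$ to $b$, your argument skips both the passage through $a$ and this case analysis, so the key step $\mathcal{E} = \varnothing$ is not established.
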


Note that Proposition \ref{b_zero_v_zero} can be considered as an analog of Corollary 3 of \cite{GN}.

\begin{proof}[Proof of Proposition \ref{b_zero_v_zero}]
\begin{enumerate}

\item  First we will prove that from the assumptions of this proposition it follows that $ a( \lambda ) \equiv 0 $ in $ U_0 \cup U_{ \infty } $, where $ U_0 $ and $ U_{ \infty } $ are such neighborhoods of $ 0 $ and $ \infty $, respectively, that $ \Delta( \lambda ) \neq 0 $ for $ \lambda \in U_0 \cup U_{ \infty } $. We note that from (\ref{a_formula}), (\ref{b_formula}), (\ref{mu_dbar}), (\ref{mu_limits}) it follows that
\begin{gather}
\label{a_dbar}
\frac{ \partial a ( \lambda ) }{ \partial \bar \lambda } = \frac{ \pi \sgn( 1 - \lambda \bar \lambda ) }{ \bar \lambda } b( \lambda ) \overline{ b( \lambda ) }, \quad \lambda \in \mathbb{C} \backslash ( \mathcal{E} \cup 0 ), \\
\label{a_limits}
a( \lambda ) \to \hat v( 0 ) \text{ as } \lambda \to \infty \text{ or } \lambda \to 0, \text{ where} \\
\label{v_fourier}
\hat v( p ) = \left( \frac{ 1 }{ 2 \pi } \right)^2 \iint\limits_{ z \in \mathbb{C} } e^{ \frac{ i }{ 2 } ( \bar p z + p \bar z ) } v( z ) d z_R d z_I, \quad p \in \mathbb{C}.
\end{gather}
It means that
\begin{equation}
\label{a_holomorphic}
a( \lambda ) \text{ is holomorphic in } \mathbb{C} \backslash ( \mathcal{E} ).
\end{equation}

According to item \ref{constT} of Statement \ref{delta_prop}, $ \Delta( \lambda ) \equiv \const $ for $ \lambda \in T $. We will consider separately two cases: $ \Delta \equiv C \neq 0 $ on $ T $ and $ \Delta \equiv 0 $ on $ T $.

\begin{itemize}
\item[(a)] $ \Delta( \lambda ) \equiv C \neq 0 $ on $ T $:

From item \ref{continuity} of Statement \ref{delta_prop} it follows that there exists $ U_T $, a neighborhood of $ T $, such that $ \Delta( \lambda ) \neq 0 $ in $ U_T $. Thus $ a( \lambda ) $ is holomorphic in $ U_T $. From Statement \ref{abT} we obtain that $ a( \lambda ) = b( \lambda ) = 0 $ on $ T $. It follows then that $ a( \lambda ) \equiv 0 $ in $ U_T $. Using statement \ref{real_analytic}, we obtain that $ a( \lambda ) \equiv 0 $ in $ U_0 \cup U_{ \infty } $.

\item[(b)] $ \Delta( \lambda ) \equiv 0 $ on $ T $:

In \cite{HN} the $ \bar \partial $--equation for $ \Delta $ was derived. In variables $ \lambda $, $ \bar \lambda $ it is written as
\begin{equation}
\label{dbar_delta}
\frac{ \partial \ln \Delta ( \lambda ) }{ \partial \bar \lambda } = - \frac{ \pi \sgn ( \lambda \bar \lambda - 1 ) }{ \bar \lambda } \left( a\left( \frac{ 1 }{ \bar \lambda } \right) - \hat v( 0 ) \right).
\end{equation}

Equation (\ref{dbar_delta}) and properties (\ref{a_limits}), (\ref{a_holomorphic}) imply that $ \frac{ \partial \ln \Delta }{ \partial \bar \lambda } $ is an antiholomorphic function in $ U_{ 0 } \cup U_{ \infty } $, where $ \Delta $ is close to $ 1 $ and, thus, $ \ln \Delta $ is a well--defined one--valued function. As $ \Delta $ is a real--valued real analytic function, it follows that
\begin{equation}
\ln \Delta = f( \lambda ) + \overline{ f( \lambda ) }
\end{equation}
for some holomorphic function $ f( \lambda ) $, or
\begin{equation}
\label{Frepresentation}
\Delta = F( \lambda ) \overline{F( \lambda )}
\end{equation}
for some holomorphic function $ F( \lambda ) $ on $ U_{ 0 } \cup U_{ \infty } $. Now we will use the following lemma (the proof of this lemma is given in Section \ref{proof_section}):
\end{itemize}
\end{enumerate}

\begin{lemma}
\label{auxiliary_lemma}
Let $ \Delta( \lambda ) $ be real--analytic in $ D_+ = \{ \lambda \in \mathbb{C} \colon 0 < | \lambda | \leq 1 \} $. Suppose that $ \Delta( \lambda ) $ can be represented as
\begin{equation}
\label{mod_representation}
\Delta( \lambda ) = F( \lambda ) \overline{ F( \lambda ) }, \quad \lambda \in U_0,
\end{equation}
for some function $ F( \lambda ) $ holomorphic on $ U_0 $, a neighborhood of zero. Then the representation (\ref{mod_representation}) holds on $ D_+ $, i.e. $ F( \lambda ) $ can be extended analytically to $ D_+ $.
\end{lemma}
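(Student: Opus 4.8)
The plan is to reconstruct the holomorphic factor from the logarithmic derivative $\phi(\lambda) := \Delta_\lambda(\lambda)/\Delta(\lambda)$ and then to control its periods. We may assume $\Delta \not\equiv 0$ (otherwise $\Delta \equiv 0$ on $D_+$ by real--analyticity and $F \equiv 0$ works). First I would complexify: since $\Delta$ is real--analytic on the connected set $D_+$, there is a function $\mathbf{\Delta}(\lambda, w)$, holomorphic in the two independent complex variables $(\lambda, w)$ on a connected neighborhood $\Omega$ of the anti--diagonal $\{ (\lambda, \bar\lambda) \colon \lambda \in D_+ \}$, with $\mathbf{\Delta}(\lambda, \bar\lambda) = \Delta(\lambda)$. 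Over $U_0$ the hypothesis (\ref{mod_representation}) complexifies to $\mathbf{\Delta}(\lambda, w) = F(\lambda) G(w)$, where $G(w) := \overline{F(\bar w)}$ is holomorphic and satisfies $G(\bar\lambda) = \overline{F(\lambda)}$.

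Next I would propagate this product structure by the identity theorem. The function $N := \mathbf{\Delta}\, \partial_\lambda \partial_w \mathbf{\Delta} - \partial_\lambda \mathbf{\Delta}\, \partial_w \mathbf{\Delta}$ is holomorphic on $\Omega$ and vanishes on the nonempty open part of $\Omega$ lying over $U_0 \setminus \{ 0 \}$, because there $\mathbf{\Delta}$ is the product $F(\lambda) G(w)$; since $\Omega$ is connected, $N \equiv 0$ on $\Omega$. Consequently $\partial_w \big( \partial_\lambda \mathbf{\Delta} / \mathbf{\Delta} \big) = 0$, i.e. $\partial_\lambda \log \mathbf{\Delta}$ does not depend on $w$. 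Restricting to the diagonal $w = \bar\lambda$, this shows that $\phi(\lambda) = \Delta_\lambda / \Delta$ is holomorphic in $\lambda$ wherever $\Delta \neq 0$ (equivalently, $\log \Delta$ is harmonic there) and extends to a meromorphic function on $D_+$. Its only singularities are simple poles at the zeros of $\Delta$, which are therefore isolated, and the residue at each such pole equals the order of the corresponding zero of the holomorphic factor, hence is a positive integer.

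Finally I would recover $F$ by integration: set $\widetilde F(\lambda) := F(\lambda_0) \exp \int_{\lambda_0}^{\lambda} \phi(\zeta)\, d\zeta$ for a base point $\lambda_0 \in U_0 \setminus \{ 0 \}$. The point on which the whole argument hinges — and the main obstacle — is single--valuedness of this integral on the non--simply--connected domain $D_+$. The fundamental group of $D_+$ with the isolated zeros of $\Delta$ removed is generated by small loops around those zeros together with one loop around the puncture at $0$. Around a zero the period of $\phi$ is $2\pi i$ times the positive integer residue, hence lies in $2\pi i \mathbb{Z}$; around the puncture I would compute the period inside $U_0$, where $\phi = F'/F$ for the given function $F$ holomorphic on the full neighborhood $U_0 \ni 0$, so the period equals $2\pi i$ times the number of zeros of $F$ enclosed and again lies in $2\pi i \mathbb{Z}$. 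Thus $\exp \int \phi$ is single--valued, and the positivity of the residues guarantees that $\widetilde F$ is holomorphic on all of $D_+$ (with genuine zeros, not poles, at the zeros of $\Delta$); moreover $\widetilde F = F$ on $U_0$, since they have equal logarithmic derivative and agree at $\lambda_0$. It then remains to check $\Delta = \widetilde F\, \overline{\widetilde F}$ on $D_+$: by construction $\partial_\lambda \log( \widetilde F\, \overline{\widetilde F} ) = \phi = \partial_\lambda \log \Delta$, and taking conjugates gives the matching identity for $\partial_{\bar\lambda}$, so $\widetilde F\, \overline{\widetilde F} / \Delta$ is a real constant, equal to $1$ because the identity already holds on $U_0$. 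This extends (\ref{mod_representation}) to $D_+$ and completes the proof.
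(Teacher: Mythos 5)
Your route is genuinely different from the paper's. The paper works at the circle of convergence of the Taylor series of $F$: at a boundary point $\lambda_0$ it expands $\Delta$ in a double series $\sum b_{k,j}(\lambda-\lambda_0)^k(\bar\lambda-\bar\lambda_0)^j$, shows by taking limits from inside the disc that the coefficient matrix is Hermitian with nonnegative diagonal and satisfies the rank--one identity $b_{k,j}b_{m,l}=b_{k,l}b_{m,j}$, factors $b_{k,j}=a_k\bar a_j$, and thereby builds a local holomorphic factor $F_1$ at $\lambda_0$ with $|F_1|=|F|$ on the overlap, which continues $F$ past $\lambda_0$ and forces the radius of convergence to be $1$. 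You instead complexify, propagate the separability condition $N\equiv 0$ by the identity theorem, and reconstruct $F$ globally as $\exp\int\Delta_\lambda/\Delta$ with a monodromy argument. The strategy is attractive, but as written it has a gap precisely at the crux of the lemma: the zeros of $\Delta$.

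Concretely, what your argument establishes is only that $\phi=\Delta_\lambda/\Delta$ is holomorphic on $D_+\setminus Z$, where $Z=\{\Delta=0\}$. You then assert that $\phi$ extends meromorphically, that the points of $Z$ are \emph{therefore} isolated, and that the residues equal ``the order of the corresponding zero of the holomorphic factor.'' None of this follows from what precedes it: a priori $Z$ could be a real--analytic curve, in which case $\phi$ need not extend at all; and speaking of ``the holomorphic factor'' at a point of $Z$ presupposes a local factorization $\mathbf{\Delta}(\lambda,w)=A(\lambda)B(w)$ near that point, which is exactly the local content of the lemma (away from $Z$ the factorization is immediate from $\partial_\lambda\partial_w\log\mathbf{\Delta}=0$, so the zeros are the only place where anything can fail). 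The gap is fillable: from $N\equiv 0$ one gets the local product structure even at zeros, e.g.\ by fixing a slice $w_0$ with $\mathbf{\Delta}(\cdot,w_0)\not\equiv 0$ and observing that every slice $\mathbf{\Delta}(\cdot,w)$ solves the same linear ODE $u'=\alpha u$ with $\alpha=\mathbf{\Delta}_\lambda/\mathbf{\Delta}$, hence is proportional to $\mathbf{\Delta}(\cdot,w_0)$; one must then also invoke the reality of $\Delta$ (which you use only tacitly at the very end) to conclude that $A(\lambda)$ and $\overline{B(\bar\lambda)}$ vanish at the same points with the same multiplicities, so that the residues of $\phi$ are positive integers and $\exp\int\phi$ acquires genuine zeros, rather than staying nonvanishing, where $\Delta$ vanishes. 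With these two additions your proof closes; without them the central claims about $Z$ and the residues are unsupported.
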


\begin{itemize}
\item[]
\begin{itemize}
\item[]
Thus, the representation (\ref{Frepresentation}) is valid separately on $ D_+ $ and on $ D_- $, where we used also item \ref{delta_sym} of Statement \ref{delta_prop}. As $ \Delta( \lambda ) \equiv 0 $ on $ T $, it follows that $ F( \lambda ) \equiv 0 $ on $ T $ and, further, $ F( \lambda ) \equiv 0 $ on $ \mathbb{C} $. This contradicts with item \ref{limits} of Statement \ref{delta_prop}. Thus we have shown that under the assumptions of Proposition \ref{b_zero_v_zero} the case $ \Delta( \lambda ) \equiv 0 $ on $ T $ cannot hold.

\end{itemize}

\item[2.] Our next step is to prove that $ \Delta( \lambda ) \equiv 1 $ for $ \lambda \in \mathbb{C} $.

Formula (\ref{a_limits}) states that $ a( 0 ) = a( \infty ) = \hat v( 0 ) $. Thus from equation (\ref{dbar_delta}) it follows that $ \frac{ \partial \ln \Delta }{ \partial \overline{ \lambda } } = 0 $, and $ \ln \Delta $ is holomorphic in some neighborhood of $ 0 $ and $ \infty $. As $ \Delta( \lambda ) $ is a real--valued function and item \ref{limits} of Statement \ref{delta_prop} holds, we conclude that $ \Delta \equiv 1 $ in some neighborhood of $ 0 $ and $ \infty $. Now using Statement \ref{real_analytic}, we obtain that $ \Delta \equiv 1 $ in $ \mathbb{C} $ and, as a corollary, $ \mathcal{E} = \varnothing $.

\item[3.] From the previous item it follows that equation (\ref{mu_dbar}) holds for $ \forall \lambda $: $ \lambda \in \mathbb{C} \backslash 0 $. Due to the assumptions of Proposition \ref{b_zero_v_zero} and the property that $ \mathcal{E} = \varnothing $, we have that $ b \equiv 0 $ on $ \mathbb{C} $ which means that $ \mu( z, \lambda ) $ is holomorphic on $ D_+ $, $ D_- $. As it is also continuous on $ \mathbb{C} $ and property (\ref{mu_limits}) holds, we conclude that $ \mu( z, \lambda ) \equiv 1 $ and $ v( z ) \equiv 0 $.
\end{itemize}
\end{proof}

\begin{proof}[Proof of Theorem \ref{main_theorem} for $ E = E_{fixed} < 0 $] The result follows immediately from Propositions \ref{sol_b_zero}, \ref{b_zero_v_zero}.
\end{proof}
\section{Proof of Lemma \ref{auxiliary_lemma}}
\label{proof_section}

As $ F( \lambda ) $ is analytic in $ U_0 $, it can be represented in this domain by a Taylor series. Let us consider the radius of convergence $ R $ of this Taylor series. Suppose that the statement of Lemma \ref{auxiliary_lemma} is not true and $ R < 1 $.

Let us take a point $ \lambda_0 $, such that $ | \lambda_0 | = R $. In this point $ \Delta( \lambda ) $ can be represented by the following series
\begin{equation}
\label{delta_series}
\Delta( \lambda ) = \sum_{ k, j = 0 }^{ \infty } b_{ k, j } ( \lambda - \lambda_0 )^k ( \bar \lambda - \bar \lambda_0 )^j
\end{equation}
uniformly convergent in $ U_{ \lambda_0 } $, some neighborhood of $ \lambda_0 $. We will prove that the coefficients $ b_{ j, k } $ satisfy the following properties:
\begin{align*}
& b_{ k, k } \in \mathbb{R}, \quad b_{ k, k } \geq 0;& &(a) \\
& b_{ k, j } = \overline{ b_{ j, k } };& &(b) \\
& b_{ k, j } b_{ m, l } = b_{ k, l } b_{ m, j }.& &(c)
\end{align*}

Indeed,
\begin{align*}
& (a) \colon b_{ k, k } = \left. \frac{ 1 }{ ( k! )^2 } \partial_{ \lambda }^k \partial_{ \bar \lambda }^k \Delta( \lambda ) \right|_{ \lambda = \lambda_0 } = \lim_{ \lambda \to \lambda_0 } \frac{ 1 }{ ( k! )^2 } \partial_{ \lambda }^k \partial_{ \bar \lambda }^k \Delta( \lambda ) = \lim_{ \lambda \to \lambda_0 } \frac{ 1 }{ ( k! )^2 } | \partial_{ \lambda }^k F( \lambda ) |^2 \in \mathbb{R}, \geq 0. \\
& (b) \colon b_{ k, j } = \lim_{ \lambda \to \lambda_0 } \frac{ 1 }{ k! j! } \partial_{ \lambda }^k F( \lambda ) \overline{ \partial_{ \lambda }^j F( \lambda ) } = \lim_{ \lambda \to \lambda_0 } \overline{ \frac{ 1 }{ k! j! } \overline{ \partial_{ \lambda }^k F( \lambda ) } \partial_{ \lambda }^j F( \lambda ) } = \overline{ b_{ j, k } }. \\
& (c) \colon b_{ k, j } b_{ m, l } = \lim_{ \lambda \to \lambda_0 } \frac{ 1 }{ k! j! m! l! } \partial_{ \lambda }^k F( \lambda ) \overline{ \partial_{ \lambda }^j F( \lambda ) } \partial_{ \lambda }^m F( \lambda ) \overline{ \partial_{ \lambda }^l F( \lambda ) } = b_{ k, l } b_{ m, j }.
\end{align*}

From properties (a)--(c) it follows that there exist such $ a_k \in \mathbb{C} $, $ k = 0, 1, \ldots $, that
\begin{equation}
\label{decomposition}
b_{ k, j } = a_k \bar a_j.
\end{equation}

We will prove this statement by considering two different cases:
\begin{enumerate}
\item $ b_{ k, k } = 0 $ $ \forall k \in \mathbb{N} \cup 0 $

In this case from properties (b), (c) it follows that $ b_{ k, j } = 0 $ $ \forall k, j \in \mathbb{N} \cup 0 $, and we can take $ a_k = 0 $ $ \forall k \in \mathbb{N} \cup 0 $.

\item $ b_{ k, k } \neq 0 $ for some $ k \in \mathbb{N} \cup 0 $.

In this case we take $ l $ to be the minimal number such that $ b_{ l, l } \neq 0 $. Then we set $ a_0 = a_1 = \ldots = a_{ l - 1 } = 0 $ and we take an arbitrary complex number $ a_l $ satisfying $ | a_l |^2 = b_{ l, l } $. For the rest of the coefficients we set
\begin{equation}
\label{build_a_n}
a_{ n } = \dfrac{ b_{ n, l } }{ \bar a_{ l } },
\end{equation}
where $ n = l + 1, l + 2, \ldots $.

Now let us prove property (\ref{decomposition}). Let us suppose that $ k < l $. Then $ a_k = 0 $, $ b_{ k, k } = 0 $ and from properties (b), (c) it follows that $ b_{ k, j } = 0 $ $ \forall j \in \mathbb{N} \cup 0 $. Thus property (\ref{decomposition}) holds when $ k < l $. A similar reasoning can be carried out when $ j < l $. Now let us suppose that $ k \geq l $, $ j \geq l $. Then
\begin{equation}
a_k \bar a_j = \frac{ b_{ k, l } \bar b_{ j, l } }{ \bar a_{ l } a_{ l } } = \frac{ b_{ k, l } b_{ l, j } }{ b_{ l, l } } = b_{ k, j }.
\end{equation}
Representation (\ref{decomposition}) is proved.

\end{enumerate}

From convergence of series (\ref{delta_series}) it follows that the following series
\begin{equation}
\label{f_series}
F_1( \lambda ) = \sum_{ k = 0 }^{ \infty } a_k ( \lambda - \lambda_0 )^k
\end{equation}
converges uniformly in $ U_{ \lambda_0 } $ (indeed, the case when $ b_{ k, k } = 0 $ $ \forall k \in \mathbb{N} \cup 0 $ is trivial, and in the case when $ \exists l \colon b_{ l, l } \neq 0 $ we take the sum of the members of series (\ref{delta_series}) with coefficients $ b_{ k, l } $, $ k = 0, 1, \ldots $, and obtain series (\ref{f_series}) multiplied by $ \bar a_l ( \bar \lambda - \bar \lambda_0 )^l $). Thus there exists the function $ F_1( \lambda ) $ analytic in $ U_0 $ such that $ \Delta( \lambda ) = F_1( \lambda ) \overline{ F_1( \lambda ) } $. Consequently, we have two functions $ F( \lambda ) $ and $ F_1( \lambda ) $ analytic in a common domain lying in $ \{ \lambda \in \mathbb{C} \colon | \lambda | \leq R \} \cap U_{ \lambda_0 } $ and such that $ | F( \lambda ) | = | F_1( \lambda ) | $. It means that $ F( \lambda ) $ and $ F_1( \lambda ) $ are equal up to a constant factor: $ F( \lambda ) = \mu F_1( \lambda ) $, $ | \mu | = 1 $. It follows then that $ \mu F_1( \lambda ) $ is an analytic continuation of $ F( \lambda ) $ to $ U_{ \lambda_0 } $.

The same reasoning can be applied to any point $ \lambda_0 $ on the boundary of the ball $ B_R = \{ \lambda \in \mathbb{C} \colon | \lambda | \leq R \} $, i.e. $ F( \lambda ) $ can be continued analytically to some larger domain. Hence we obtain a contradiction to the assumption that $ R < 1 $ is the radius of convergence of the Taylor series for $ F( \lambda ) $. Thus $ R = 1 $ and $ F( \lambda ) $ can be extended analytically to $ D_+ $. \qed

\end{document}